\title{Computing the Complete Pareto Front}
\author{R\"udiger Ehlers
	\institute{University of Bremen \& DFKI GmbH}
}
\newcommand{\newterm}{\emph}
\newcommand{\BB}{\mathbb{B}}
\newcommand{\TRUE}{\mathbf{true}}
\newcommand{\FALSE}{\mathbf{false}}
\newtheorem{example}{Example}
\newtheorem{lemma}{Lemma}
\newtheorem{corollary}{Corollary}
\begin{document}
\maketitle

\begin{abstract}
We give an efficient algorithm to enumerate all elements of a Pareto front in a multi-objective optimization problem in which the space of values is finite for all objectives. 

Our algorithm uses a feasibility check for a search space element as an oracle and minimizes the number of oracle calls that are necessary to identify the Pareto front of the problem. Given a $k$-dimensional search space in which each dimension has $n$ elements, it needs $p \cdot (k \cdot \lceil \log_2 n \rceil + 1) + \psi(p)$ oracle calls, where $p$ is the size of the Pareto front and $\psi(p)$ is the number of greatest elements of the part of the search space that is not dominated by the Pareto front elements. We show that this number of oracle calls is essentially optimal as approximately $p \cdot k \cdot \log_2 n$ oracle calls are needed to identify the Pareto front elements in sparse Pareto sets and $\psi(p)$ calls are needed to show that no element is missing in the set of Pareto front elements found.
\end{abstract}

\section{Introduction}
In multi-objective optimization, we explore the \newterm{Pareto front} of an optimization problem. It consists of all solutions that cannot be improved by some objective without sacrificing their quality in some other objective. 

Due to the fact that many optimization problems in practice have multiple objectives, computing \newterm{Pareto-optimal} solutions (i.e., the elements of the Pareto front) is a classical topic in constraint programming, mathematical optimization, and operations research \cite{NonLinearOpBook,DBLP:conf/aaai/HartertS14,DBLP:conf/dagstuhl/2008moo}. Most algorithms to explore the Pareto front focus on \emph{sampling} it, i.e., computing some  representative elements of the Pareto front \cite{Shan2004}. 
Others use the properties of the concrete optimization problem to derive the complete Pareto front \cite{DBLP:journals/eor/OkamotoU11}. For example, in linear programming with multiple optimization objectives, the set of Pareto-optimal solutions forms a set of polyhedra that can be represented in an explicit way.

If the space of possible solutions is finite, \emph{enumerating} the set of Pareto-optimal solutions can be feasible, even if the structure of the optimization problem is unknown and only an \emph{oracle} is available that returns for some search space element whether it is feasible, i.e., whether it is a possible solution to the optimization problem. Elbassioni \cite{ElbassioniThesis} gives an algorithm to compute all Pareto-optimal solutions using an oracle for a \emph{submodular} optimization problem. The submodularity requirement essentially amounts to the set of solutions being convex or concave. The algorithm requires a number of oracle calls that is quasi-polynomial in $k$, $n$, and $p$, where $k$ is the number of objectives, $n$ is the the number of possible values for each objective, and $p$ is the number of Pareto-optimal solutions. Elbassioni shows that many optimization problems over graphs and multi-objective linear programming problems are submodular, and hence his algorithm can be applied.
A more general algorithm that does not require the submodularity of an optimization problem was not yet known, however.

\subsection{Contribution}

In this paper, we give an efficient algorithm to compute the Pareto front of a multi-objective optimization problem. It requires the set of solutions to be finite and uses an oracle to evaluate whether a solution is feasible or not. The problem does not need to be submodular and the only thing that needs to be known about it is that it is \newterm{monotone}, i.e., whenever a solution is feasible, then all other solutions that are \newterm{dominated} by it also need to be feasible.

Our algorithm minimizes the number of calls to the oracle and requires $p \cdot (k \cdot \lceil \log_2 n \rceil + 1) + \psi(p)$ many calls, where $p$ is the number of Pareto-optimal solutions, $k$ is the number of objectives, and $n$ is the number of elements in each dimension. The term $\psi(p)$ represents the size of the \newterm{co-Pareto set}, i.e., the number of greatest elements of the part of the search space that contains the infeasible solutions.
We show that our algorithm is essentially optimal by proving that
\begin{enumerate}
\item when dividing the search space into $(\frac{n}{\log_2 n})^k$ large regions such that in each region, there is at most one Pareto front element, no two regions with Pareto front elements have comparable elements, and for all of the $p$ Pareto front elements, the region that it contains is known, $p \cdot k \cdot \log_2 \left(\frac{n}{\log_2 n} \right)$ calls to the oracle are necessary in order to compute the exact positions of the Pareto front elements, and
\item when $p$ Pareto front elements have been found, $\psi(p)$ calls to the oracle need to be made in order to check that no Pareto front element went unnoticed.
\end{enumerate}
Our algorithm is particularly well-suited to explore Pareto fronts if the number of Pareto-optimal solutions is substantially smaller than $n$. It is also \newterm{any-time}, meaning that it computes the Pareto points successively, so that they can be reported to the user one-by-one, without the need to wait for the algorithm's completion.

Implementations of the algorithm presented in this paper in the programming languages \texttt{Python} and \texttt{C++} can be obtained from \url{https://github.com/progirep/ParetoFrontEnumerationAlgorithm}. The implementations are available under the permissive MIT open source licence.

\subsection{Structure of the Paper}

In the next section, we give the preliminaries. Our Pareto front enumeration algorithm is given in Section~\ref{sec:algorithm}. The section afterwards contains a discussion of the algorithm, including a correctness proof and a complexity analysis. We conclude in Section~\ref{sec:conclusion}.

\section{Preliminaries}
Let $[n]^k = \{0,1, \ldots, n-1,n\}^k$ be the \newterm{search space}, and let $\leq_k$ be a partial order over elements from $[n]^k$ that compares two elements $\vec x = (x_1, \ldots, x_k)$ and $\vec x' = (x'_1, \ldots, x'_k)$ element-wise, i.e., for which we have $\vec x \leq_k \vec x'$ if and only if for all $i \in \{1,\ldots,k\}$, we have $x_i \leq x'_i$. If $\vec x, \vec x' \in [n]^k$ with $\vec x \neq \vec x'$ and $\vec x \leq_k \vec x'$, we say that $\vec x$ is \newterm{smaller} than $\vec x'$, or conversely, that $\vec x'$ is \newterm{greater} than $\vec x$. If $\vec x$ is neither equal to, greater than, or smaller than $\vec x'$, then we say that $\vec x$ and $\vec x'$ are \newterm{incomparable}. If $\vec x$ is smaller than $\vec x'$, then we also say that $\vec x$ \newterm{dominates} $\vec x'$. We say that a set $S \subseteq [n]^k$ is an \newterm{anti-chain} if the elements from $S$ are pair-wise incomparable.

A function $f : [n]^k \rightarrow \BB$ is called a \newterm{feasibility function} if it is \newterm{monotone}, i.e., if for all $\vec x$ with $f(\vec x) = \TRUE$, we have that $f(\vec x') = \TRUE$ holds for all $\vec x'$ that are greater than $\vec x$. We say that a \newterm{point} $\vec x \in [n]^k$ is \newterm{Pareto-optimal} if $f(\vec x) = \TRUE$ and for every $\vec x'$ that is smaller than $\vec x$, we have $f(\vec x') = \FALSE$. 
The set of all Pareto optima of a function $f$ is also called the \newterm{Pareto front}. It forms an anti-chain by definition.

We will also later need the notion of a \newterm{co-Pareto point}. An element $\vec x$ of $[n]^k$ is called a \newterm{co-Pareto point} if $f(\vec x) = \FALSE$ and for all $\vec x'$ that are greater than $\vec x$, we have $f(\vec x') = \TRUE$. The set of all co-Pareto points is also called the \newterm{co-Pareto front}.
We say that some point $\vec x$ lies \newterm{between} some point $\vec y$ and $(0)^k$ if we have $(0)^k \leq_k \vec x \leq_k \vec y$.

\paragraph{Note:}
For the simplicity of notation, we assumed that all optimization objectives have the same domain. The enumeration algorithm to the presented in the next section is easy to adapt to optimization objectives with different domains, though.

\section{Exploring the Pareto Front}
\label{sec:algorithm}
The design objective of the algorithm to enumerate all elements of a given feasibility function's Pareto front is to keep the number of evaluations of the function as low as possible. It enumerates the Pareto front in a point by point fashion and maintains a set $S$ of maximal elements of the part of the search space that still has to be checked for if it contains more Pareto points. Whenever there is another Pareto point, then it is assured that it dominates or is equal to one of the points in $S$. 
The algorithm makes sure that $S$ is kept complete enough not to ever miss a Pareto point.

\begin{algorithm}[t]
  \newcommand{\LineIf}[2]{     \State \algorithmicif\ {#1}\ \algorithmicthen\ {#2} }
   \begin{algorithmic}[1]
 	\Function{ParetoEnumerate}{$n$,$k$,$f$}
 	\State $S \gets \{ (n)^k \}$
 	\State $P \gets \emptyset$
 	\While{$S \neq \emptyset$}
 		\State pick (without removing) some $\vec x$ from $S$
 		\If{$f(\vec x) = \TRUE$} \label{line:checkSearchPoint}
 			\State $\vec x \gets $\Call{SearchParetoPoint}{$\vec x,k,f$} \label{line:searchParetoPoint}
 			\State $P \gets P \cup \{\vec x\}$ \label{line:updateSAndPStart}
 			\State $S' \gets \emptyset$
 			\For {$\vec y \in S$} \label{line:SUpdateLoopStart}
	 			\If{$\neg (\vec x \leq_k \vec y)$}
 					\State $S' \gets S' \cup \{ \vec y\}$
 				\Else
 					\For{$i \in \{1, \ldots, k\}$} \label{line:updateLoopStart}
 						\If{$x_i > 0$}
	 						\State $S' \gets S' \cup \{ (y_1, \ldots, y_{i-1},x_i - 1, y_{i+1}, \ldots, y_k) \}$
	 					\EndIf 
 					\EndFor \label{line:updateLoopEnd}
 				\EndIf
 			\EndFor \label{line:SUpdateLoopEnd}
 			\State $S \gets \Call{RemoveDominatingElements}{S',k}$ \label{line:updateSAndPEnd}
 		\Else
 			\State $S \gets S \setminus \{\vec x\}$
 		\EndIf
 	\EndWhile
	\State \Return $P$ 	
	\EndFunction
   \end{algorithmic}
   \caption{\label{algo:paretoEnumerator} Pareto point enumeration algorithm}
\end{algorithm}

Algorithm~\ref{algo:paretoEnumerator} shows the main function of the Pareto front exploration approach. The set of Pareto points that have already been found is stored in the variable $P$. The algorithm uses the two sub-functions \textsc{SearchParetoPoint} and \textsc{RemoveDominatingElements}, which are given separately in Algorithm~\ref{algo:searchParetoPoint} and Algorithm~\ref{algo:RemoveDominatingElements}.

\begin{algorithm}[t]
  \newcommand{\LineIf}[2]{     \State \algorithmicif\ {#1}\ \algorithmicthen\ {#2} }
   \begin{algorithmic}[1]
 	\Function{SearchParetoPoint}{$\vec x$,$k$,$f$}
	\For{$i \in \{1, \ldots, k\}$}
		\State $\mathit{max} \gets x_i + 1$
		\State $\mathit{min} \gets 0$
		\While{$\mathit{max}-\mathit{min} > 1$} \label{line:binarySearchStart}
			\State $\mathit{mid} \gets \mathit{min} + \lfloor \frac{\mathit{max} - \mathit{min} - 1}{2}\rfloor$ \label{line:testPoint}
			\State $x_i \gets \mathit{mid}$
			\If{$f(\vec x) = \TRUE$}
				\State $\mathit{max} \gets \mathit{mid}+1$
			\Else
				\State $\mathit{min} \gets \mathit{mid}+1$
			\EndIf
		\EndWhile \label{line:binarySearchEnd}
		\State $x_i \gets \mathit{min}$
	\EndFor
	\State \Return $\vec x$
	\EndFunction
   \end{algorithmic}
   \caption{\label{algo:searchParetoPoint} A sub-function to find a Pareto point that dominates a feasible solution $\vec x$ (or is $\vec x$ if no feasible solution dominates $\vec x$) based on binary search.}
\end{algorithm}

\begin{algorithm}[t]
  \newcommand{\LineIf}[2]{     \State \algorithmicif\ {#1}\ \algorithmicthen\ {#2} }
   \begin{algorithmic}[1]
 	\Function{RemoveDominatingElements}{$S$,$k$}
	\State $S' \gets \emptyset$
	\For{$\vec x \in S$}
		\State $\mathit{found} \gets \FALSE$
		\For{$\vec y \in S$}
			\If{$(\vec x \leq_k \vec y) \wedge (\vec x \neq \vec y)$}
				\State $\mathit{found} \gets \TRUE$
			\EndIf 
		\EndFor	
		\If{$\neg \mathit{found}$}
			\State $S' \gets S' \cup \{ \vec x \}$
		\EndIf
	\EndFor
	\State \Return $S'$
	\EndFunction
   \end{algorithmic}
   \caption{\label{algo:RemoveDominatingElements} A sub-function to remove dominating elements from $S$}
\end{algorithm}

The function \textsc{ParetoEnumerate} initializes $S$ to contain exactly the greatest element of the search space -- all Pareto points are definitely smaller than or equal to this point. 
In the main loop of the algorithm, it takes an element $\vec x$ from $S$ and checks if the space between $(0)^k$ and $\vec x$ contains any feasible solution for $f$. Due to the monotonicity of $f$, this is the case if and only if $f(\vec x)=\TRUE$. If $f(\vec x)=\FALSE$, then $\vec x$ is removed from $S$, as the part of the search space between $(0)^k$ and $\vec x$ then no longer has to be checked for containing a Pareto point.

The algorithm maintains the invariant that no point in $S$ is dominated by a point in $P$, and no point in $P$ is dominated by a point in $S$. Thus, if $f(\vec x)$ is found to be $\TRUE$, then we already know that some Pareto point is missing from $P$, and this point is either $\vec x$, or we have that $\vec x$ is dominated by some Pareto point that is not yet in $P$.

To find the Pareto point, Algorithm~\ref{algo:paretoEnumerator} calls the sub-function \textsc{SearchParetoPoint}. It takes the feasible solution $\vec x$ and finds \emph{one} Pareto point $\vec y$ such that $\vec y \leq_k \vec x$. It performs a binary search in each dimension. So in each iteration of its main loop, the $i$th element of $\vec x$ is minimized while retaining the fact that $f(\vec x) = \TRUE$. As we thus know that at the end of the main loop, no element of $x_1, \ldots, x_k$ can be reduced by one without losing the property that $\vec x = \TRUE$, we know that the function returns a Pareto point.

In lines~\ref{line:updateSAndPStart}--\ref{line:updateSAndPEnd}, Algorithm~\ref{algo:paretoEnumerator} updates the sets $P$ and $S$. The  newly found Pareto point is added to $P$, and $S$ is altered such that (1) the invariant holds that no point in $S$ is dominated by one in $P$, and that (2) $S$ contains the maximal elements of the search space fraction for which it is not yet known if it contains Pareto points. For altering $S$, the loop from lines~\ref{line:SUpdateLoopStart}--\ref{line:SUpdateLoopEnd} iterates over the old points in $S$ and computes new points that it stores in $S'$. The set $S'$ is then stripped from all non-maximal elements in line \ref{line:updateSAndPEnd} and the result is copied back to $S$. This ensures that $S$ stays an anti-chain.

Set $S'$ is computed successively from the points $\vec y$ in $S$. If we do not have $\vec x \leq_k \vec y$, where $\vec x$ is the newly found Pareto point, then $\vec y$ may still be a Pareto point. Also, $\vec y$ does not dominate $\vec x$ in this case. Thus, $\vec y$ can remain untouched and is copied straight to $S'$. If on the other hand we have $\vec x \leq_k \vec y$, then $\vec y$ cannot remain in $S$, as it is dominated by $\vec x$, for whom we know that $f(\vec x) = \TRUE$. What we do not know at that point, however, is whether the search space has more Pareto points between $(0)^k$ and $\vec y$ apart from $\vec x$. For such other Pareto points, we know that they must have an objective value that is smaller than the respective objective value in $\vec x$ in at least one dimension, as otherwise $\vec x$ would be dominated by another feasible solution and hence would not be a Pareto point. Lines \ref{line:updateLoopStart} to \ref{line:updateLoopEnd} now compute the largest elements of the space between $\vec y$ and $(0)^k$ in which such an additional Pareto point could lie, taking into account the fact that the new point must have such a lower value in at least one objective.

After the new set of maximal elements of the part of the search space that may contain more Pareto points have been computed, elements that dominate other elements in this set are removed as they are redundant. This task is performed by subfunction \textsc{RemoveDominatingElements}. The result is copied back to $S$. The algorithm terminates when the set $S$ becomes empty, which means that there are no more parts of the search space to be explored.

\section{Discussion of Algorithm~\ref{algo:paretoEnumerator}}

Before discussing the main properties of Algorithm~\ref{algo:paretoEnumerator}, let us consider an example execution of it.

\begin{example}
Let $n = 3$, $k=3$, and $f : [n]^k \rightarrow \BB$ be defined as:
\begin{equation*}
f(\vec x) = \begin{cases}
\TRUE & \text{if } \vec x \geq_k (2,1,1) \\
\TRUE & \text{if } \vec x \geq_k (1,2,2) \\
\FALSE & \text{else}
\end{cases}
\end{equation*}
From the definition of $f$ in this example, we can immediately see that the feasibility function $f$ has two Pareto points in $[n]^k$. The algorithm initializes $S$ to $\{(3,3,3)\}$, which is dominated by all Pareto points. In line \ref{line:checkSearchPoint}, the algorithm then determines that a Pareto point is missing. The call to \textsc{SearchParetoPoint} then returns the lexicographical minimal Pareto point that is still missing. This is $(1,2,2)$ here, so $P$ is set to $\{(1,2,2)\}$. In lines \ref{line:SUpdateLoopStart} to \ref{line:SUpdateLoopEnd}, the set $S$ is then updated. As $(1,2,2) \leq_k (3,3,3)$, where the latter is the only element in $S$, lines \ref{line:updateLoopStart} to \ref{line:updateLoopEnd} are executed. As a result, the elements $\{(0,3,3), (3,1,3), (3,3,1)\}$ are stored into $S'$. As $S'$ is already an anti-chain, it is copied to $S$ in line \ref{line:updateSAndPEnd} without modification.

Let us assume that in the next iteration of the main loop of Algorithm~\ref{algo:paretoEnumerator}, the point $\vec x = (3,1,3)$ is picked from $S$. The algorithm finds that $f(\vec x) = \TRUE$. Since the Pareto point found earlier does not dominate any of the points in the search space between $(0,0,0)$ and $(3,1,3)$, the function \textsc{SearchParetoPoint} will find another Pareto point, namely $(2,1,1)$, which is the new value for $\vec x$. Afterwards, Algorithm~\ref{algo:paretoEnumerator} modifies the set $S$. Element $(0,3,3)$ gets copied to $S'$ straight as it is incomparable to $\vec x$. Both other elements  $(3,1,3)$ and $(3,3,1)$ are however dominated by $\vec x$, so they are both processed by lines~\ref{line:updateLoopStart} to \ref{line:updateLoopEnd} of the algorithm. For $(3,1,3)$, the algorithm adds the points $(1,1,3)$, $(3,0,3)$, and $(3,1,0)$ to $S'$, while $(3,3,1)$ is processed to $(1,3,1)$, $(3,0,1)$, and $(3,3,0)$. From the in total 7 points that are now in $S'$, function \textsc{RemoveDominatingElements} removes $(3,0,1)$ and $(3,1,0)$, as they are dominated by the points $(3,0,3)$ and $(3,3,0)$, respectively, that are also in $S'$. Thus we have $S = \{ (0,3,3), (1,1,3),(1,3,1),(3,0,3),(3,3,0) \}$ now. 

The algorithm then proceeds with the main loop and continues to check if $f(\vec x)$ holds for any $\vec x \in S$. This is not the case, so the algorithm terminates while returning exactly the two Pareto points used in the definition of $f$.
\end{example}

The example shows that when a new Pareto point is found, then it may dominate more than one point in $S$. Hence, the number of elements in $S$ can grow by more than $k-1$ in a single iteration of the main loop. Nevertheless, $S$ always contains the maximal elements of the part of the search space that can still contain Pareto points, which we will prove in Lemma~\ref{lem:SandPInteraction}. For each point $\vec x$ in $S$, unless for some $\vec y$ with $\vec y \leq_k \vec x$, $f(\vec y)$ is found to be $\TRUE$ at some point, we have to check if $f(\vec x)=\TRUE$, as there may be another Pareto point sitting at \emph{exactly} $\vec x$. Testing $f$ precisely at this point $\vec x$ is the most efficient way of probing if any point in the search space between $(0)^k$ and $\vec x$ contains a Pareto point.

\subsection{Proof of Correctness}

Let us now formally prove that Algorithm~\ref{algo:paretoEnumerator} computes all Pareto optima of a given feasibility function $f$. We will prove this result in a bottom-up fashion.

\begin{lemma}
Given some $\vec x$ with $f(\vec x) = \TRUE$, \textsc{SearchParetoPoint} computes a Pareto point $\leq_k \vec x$.
\label{lem:searchParetoPointWorks}
\end{lemma}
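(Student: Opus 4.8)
The plan is to show two things about the output $\vec y$ of \textsc{SearchParetoPoint}: first, that $f(\vec y) = \TRUE$, and second, that for every point strictly smaller than $\vec y$, the feasibility function returns $\FALSE$. Together with the fact that the algorithm only ever decreases coordinates (so $\vec y \leq_k \vec x$), this establishes that $\vec y$ is a Pareto point dominated by $\vec x$, which is exactly the claim.

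The key structural observation I would use is a loop invariant maintained across the outer \textbf{for} loop over dimensions $i \in \{1,\ldots,k\}$: \emph{after} processing dimension $i$, the current vector $\vec x$ satisfies $f(\vec x) = \TRUE$, and moreover its first $i$ coordinates cannot be decreased without falsifying $f$ --- that is, for each $j \leq i$, decreasing $x_j$ by one (keeping the other coordinates fixed) yields a point on which $f$ is $\FALSE$. To establish the invariant, I would analyze the inner binary search of lines~\ref{line:binarySearchStart}--\ref{line:binarySearchEnd}. The standard binary-search invariant here is that throughout the loop, setting $x_i \gets \mathit{max}$ gives $f(\vec x) = \TRUE$ (which holds initially because $\mathit{max} = x_i + 1 > x_i$ and the incoming value of $x_i$ already made $f$ true, so by monotonicity raising it keeps it true), while setting $x_i \gets \mathit{min} - 1$ would give $f(\vec x) = \FALSE$ (vacuously true initially since $\mathit{min} = 0$, and there is no value below $0$ to test). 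Each iteration probes $\mathit{mid}$ and updates exactly one of the two bounds so as to preserve both halves of this invariant, and since $\mathit{max} - \mathit{min}$ strictly decreases, the loop terminates with $\mathit{max} - \mathit{min} = 1$. Setting $x_i \gets \mathit{min}$ then yields the smallest value of $x_i$ for which $f(\vec x)$ remains $\TRUE$ given the other coordinates.

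The one subtlety to get right is that minimizing coordinate $i$ does not disturb the minimality already achieved for coordinates $j < i$. This follows from monotonicity: lowering $x_i$ only moves to a $\leq_k$-smaller point, so a coordinate $x_j$ whose further decrease was already infeasible stays infeasible to decrease (decreasing $x_j$ now would give a point even smaller than the previously-tested infeasible one, hence $\FALSE$ by monotonicity). Thus the minimality of earlier coordinates is preserved, and the invariant carries through to $i = k$.

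I expect the main obstacle to be purely notational rather than conceptual: stating the binary-search invariant carefully enough that the terminal condition $\mathit{max} - \mathit{min} = 1$ together with $x_i \gets \mathit{min}$ genuinely certifies that $x_i$ is the least feasible value, and handling the boundary case $\mathit{min} = 0$ (where ``decreasing by one'' has no witness to test, yet the Pareto-point condition only requires checking points in $[n]^k$, so a coordinate already at $0$ trivially cannot be decreased within the search space). Once the invariant is in place for all $k$ dimensions, the final vector $\vec x = \vec y$ satisfies $f(\vec y) = \TRUE$ and admits no single-coordinate decrease preserving feasibility; since any strictly smaller point is $\leq_k$ some such single-coordinate decrement, monotonicity gives $f = \FALSE$ on all of them, so $\vec y$ is Pareto-optimal.
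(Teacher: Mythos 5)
Your overall strategy is the same as the paper's: an outer-loop invariant (after dimension $i$, the current point is feasible and no coordinate $j \leq i$ can be decremented while staying feasible), established via the correctness of the inner binary search and propagated forward using monotonicity, with the final observation that any strictly smaller point lies below some single-coordinate decrement. That decomposition, and your closing monotonicity argument, match the paper's proof.

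There is, however, one concrete slip in your binary-search invariant. You maintain that setting $x_i \gets \mathit{max}$ gives $f(\vec x) = \TRUE$ and setting $x_i \gets \mathit{min}-1$ gives $\FALSE$. At termination $\mathit{max} - \mathit{min} = 1$ and the algorithm sets $x_i \gets \mathit{min} = \mathit{max} - 1$; your invariant only certifies feasibility at $\mathit{max} = \mathit{min}+1$, so it does not let you conclude $f(\vec x) = \TRUE$ after the assignment, and hence does not certify that $\mathit{min}$ is the least feasible value (nor that the outer invariant ``$f(\vec x)=\TRUE$'' survives the iteration). The invariant that is actually maintained, and that the paper uses, is the tighter one: $f(x_1,\ldots,x_{i-1},\mathit{max}-1,x_{i+1},\ldots,x_k) = \TRUE$. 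This holds initially since $\mathit{max}-1$ is the incoming value of $x_i$, and is preserved because $\mathit{max}$ is only ever updated to $\mathit{mid}+1$ after a positive test at $\mathit{mid}$; the paper also notes that the choice of $\mathit{mid}$ in line~\ref{line:testPoint} guarantees $f$ is never re-evaluated at $\mathit{max}-1$. Your version of the invariant also has a boundary defect at initialization: if $x_i = n$ then $\mathit{max} = n+1$ lies outside $[n]^k$, so ``setting $x_i \gets \mathit{max}$'' is not even a point of the search space. Replacing your invariant by the $\mathit{max}-1$ form repairs both issues and the rest of your argument goes through unchanged.
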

\begin{proof}
Note that lines \ref{line:binarySearchStart} to \ref{line:binarySearchEnd} of function \textsc{SearchParetoPoint} perform a binary search for $\mathit{mid}$ over values $\leq \mathit{min}$ but $< \mathit{max}$. For the values of $\mathit{mid}$, it tests if $f(x_1, \ldots, x_{i-1}, \mathit{mid}, x_{i+1}, \ldots, x_k)=\TRUE$ and finds the least $\mathit{mid}$ for which this holds. The search process only works correctly if it is known that $f(x_1, \ldots, x_{i-1}, \mathit{max}-1, x_{i+1}, \ldots, x_k)=\TRUE$, and due to the way in which the test points are calculated in line \ref{line:testPoint}, it never calls $f$ on the point $(x_1, \ldots, x_{i-1}, \mathit{max}-1, x_{i+1}, \ldots, x_k)$.

The outer loop of function \textsc{SearchParetoPoint} maintains the invariant that $f(\vec x) = \TRUE$ and for no $j < i$, we have $f(x_1, \ldots, x_{j-1},x_j - 1, x_{j+1}, \ldots, x_k) = \TRUE$. As \textsc{SearchParetoPoint} is only called with $\vec x$ for which $f(\vec x)$ is $\TRUE$, the invariant is initially true. In every loop, the application of binary search then sets element $x_i$ to the least value that maintains the invariant, using the monotonicity of $f$. At the end of the outer loop, all dimensions have been worked through, which ensures that $\vec x$ is a Pareto point -- there is no dimension in which $\vec x$ can be decreased without losing the fact that it is a feasible solution.
\end{proof}

\begin{lemma}
\label{lem:SandPInteraction}
After (and before) every iteration of the main loop of Algorithm~\ref{algo:paretoEnumerator}, $S$ and $P$ are disjoint, $S \cup P$ always forms an anti-chain, and there does not exist a Pareto point $\vec x$ that is not in $P$ and for which for no $\vec y \in S$, we have $\vec x \leq_k \vec y$. At the same time, $P$ contains only Pareto optima.
\end{lemma}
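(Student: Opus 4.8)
The natural strategy is induction on the number of completed iterations of the main loop, carrying all four assertions as a single joint invariant. The base case is immediate: after initialization $S = \{(n)^k\}$ and $P = \emptyset$, so disjointness, the anti-chain property of $S \cup P$, and ``$P$ contains only Pareto optima'' all hold vacuously, and since every point of $[n]^k$---in particular every Pareto point $\vec z$---satisfies $\vec z \leq_k (n)^k$, the completeness assertion holds as well. For the inductive step I would fix one iteration, assume the invariant beforehand, and split on the outcome of the test $f(\vec x) = \TRUE$ in line~\ref{line:checkSearchPoint}.

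The infeasible branch is the easy one. If $f(\vec x) = \FALSE$, then by monotonicity $f(\vec x') = \FALSE$ for every $\vec x' \leq_k \vec x$, so no Pareto point lies between $(0)^k$ and $\vec x$. The only change is $S \gets S \setminus \{\vec x\}$, and disjointness together with the anti-chain property are inherited because we pass to a subset and leave $P$ untouched. For completeness I would argue that deleting $\vec x$ cannot uncover a Pareto point: any Pareto point $\vec z \notin P$ was covered by some $\vec y \in S$ by the induction hypothesis, and that $\vec y$ cannot be $\vec x$, since $\vec z \leq_k \vec x$ would place a Pareto point below $\vec x$, which we just ruled out.

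The feasible branch is where the work lies. Writing $\vec p$ for the point returned by \textsc{SearchParetoPoint}, Lemma~\ref{lem:searchParetoPointWorks} gives that $\vec p$ is a Pareto point with $\vec p \leq_k \vec x$; combined with ``$P$ contains only Pareto optima'' this shows $P \cup \{\vec p\}$ is again an anti-chain of Pareto optima, and that $\vec p \notin P$ (otherwise $\vec p \in P$ and $\vec x \in S$ would be comparable, contradicting the anti-chain property of the old $S \cup P$). For completeness I would take an arbitrary Pareto point $\vec z \notin P \cup \{\vec p\}$, use the hypothesis to get $\vec y \in S$ with $\vec z \leq_k \vec y$, and trace $\vec y$ through the update loop. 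If $\vec p \not\leq_k \vec y$ then $\vec y$ is copied straight into $S'$; if $\vec p \leq_k \vec y$ then, since distinct Pareto points are incomparable, $\vec z$ must satisfy $z_i < p_i$ in some coordinate $i$ (forcing $p_i > 0$), and the generated point $(y_1, \ldots, p_i - 1, \ldots, y_k) \in S'$ still dominates $\vec z$. In either case $\vec z$ is covered by an element of $S'$; and since every element of $S'$ is $\leq_k$ some maximal element while \textsc{RemoveDominatingElements} returns exactly the maximal elements of $S'$, $\vec z$ is covered by an element of the final $S$.

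The step I expect to be the main obstacle is showing that the feasible-branch update preserves the anti-chain property of $S \cup P$, i.e.\ that no surviving element of $S$ is comparable to any element of $P \cup \{\vec p\}$. That the new $S$ is itself an anti-chain is handed to us by \textsc{RemoveDominatingElements}; the delicate part is the cross-comparisons. For the straight-copied elements I would reuse the hypothesis for the old members of $P$, and derive incomparability with $\vec p$ from $\vec p \leq_k \vec x$ together with the anti-chain property of the old $S$. For a generated point $(y_1, \ldots, p_i - 1, \ldots, y_k)$, the coordinate $p_i - 1$ immediately gives $\vec p \not\leq_k$ the point, and I would then combine $\vec p \leq_k \vec y$ with the incomparability of $\vec y$ to the old $P$-elements to rule out the remaining comparabilities. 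This is exactly the place where the geometry of the single-coordinate decrement must be pinned down carefully, since one has to verify that such a decremented point never accidentally becomes comparable to $\vec p$ or to a previously found Pareto optimum; I would expect the bulk of the formal effort, and the subtlest case analysis, to concentrate here.
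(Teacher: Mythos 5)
Your overall strategy coincides with the paper's: induction over the iterations of the main loop, a short infeasible branch, and a case analysis of the $S$-update in the feasible branch. Two of your steps are in fact tighter than the corresponding passages in the paper. Your completeness argument is pointwise (trace an arbitrary uncovered Pareto point $\vec z$ through the update, using the incomparability of distinct Pareto points to extract a coordinate $i$ with $z_i < p_i$), whereas the paper argues informally about ``maximal elements of the remaining region''. And your plan for showing that no generated point $(y_1,\ldots,p_i-1,\ldots,y_k)$ has an old element $\vec x'$ of $P$ below it --- combine $x'_j \leq y_j$ for $j \neq i$ with $x'_i \leq p_i - 1 < p_i \leq y_i$ to get $\vec x' \leq_k \vec y$, contradicting the incomparability of the old $S$ and $P$ --- closes cleanly; the paper instead routes the contradiction through the incomparability of $\vec x'$ and the new Pareto point and ends up using the inequality $y_j \leq x_j$, which points the wrong way given that the update is triggered by $\vec x \leq_k \vec y$.

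The step you defer as ``to be pinned down carefully'' --- that a decremented point never becomes smaller than an element of $P \cup \{\vec p\}$ --- is, however, not merely delicate: it is false, so no case analysis will close it. Take $k=2$, $n=3$, and $f(\vec x) = \TRUE$ iff $(1,2) \leq_k \vec x$ or $(2,1) \leq_k \vec x$. The first iteration finds $(1,2)$ and sets $S = \{(0,3),(3,1)\}$; picking $(3,1)$ next yields the Pareto point $(2,1)$, and processing $\vec y = (3,1)$ generates $(1,1)$, which survives \textsc{RemoveDominatingElements} and satisfies $(1,1) \leq_k (1,2) \in P$. Hence $S \cup P$ is not an anti-chain after that iteration, and the lemma is false as stated. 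This is not a defect of your write-up alone: the paper's own proof disposes of this direction in one sentence (``no point in $P$ can be dominated by one in $S$, as this would mean that a point in $P$ is not a Pareto point''), which tacitly assumes the offending $S$-element is feasible. The failure is harmless for the algorithm, since any such point lies strictly below a Pareto optimum, is therefore infeasible, and is discarded after one oracle call; the honest repair is to weaken the invariant to what is actually provable and needed downstream --- $S$ is an anti-chain, no element of $S$ has an element of $P$ strictly below it, every Pareto point outside $P$ lies below some element of $S$, and $P$ contains only Pareto optima --- all of which your argument does establish. (A minor terminological point: the paper's ``$\vec x$ dominates $\vec x'$'' means $\vec x$ is \emph{smaller}; you use ``dominates'' in the opposite sense, so align your wording before writing this up.)
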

\begin{proof}
We prove the claim by induction over the iteration of the algorithm's main loop and assume that function \textsc{RemoveDominatingElements} does what its name suggests. Due to the simplicity of that function, we omit the proof that it is correct.

Before the main loop, the claim is trivially true: $S$ contains only a single element, and $P$ does not contain an element. Thus, $S \cup P$ is an anti-chain and $S$ and $P$ are disjoint. Since the only element in $S$ is the maximal one of the search space, there cannot be a Pareto point that does not dominate it (or is the maximal element itself).

During the run of the algorithm, the algorithm takes an element $\vec x$ of $S$ and tests if $f(\vec x) = \TRUE$. If this is not the case, it removes $\vec x$ from $S$. This does not change the fact that $S \cup P$ is an antichain. Also, removing $S$ does not change the facts that $S$ and $P$ stay disjunct and that all Pareto points not yet found dominate (or are) points in $S$, as if $f(\vec x) = \FALSE$, we know that there is no feasible solution between $(0)^k$ and $\vec x$, so there is in particular no Pareto point in between.

So let us now assume that in line~\ref{line:checkSearchPoint} of the algorithm, $f(\vec x)$ is found to be $\TRUE$.
As $S$ and $P$ form an anti-chain and $f$ is monotone, this means that there is some Pareto point between $(0)^k$ and $\vec x$ that is not yet in $P$. Function~\textsc{SearchParetoPoint} then computes a new Pareto point $\vec x'$ from it (Lemma~\ref{lem:searchParetoPointWorks}). Adding $\vec x'$ to $P$ retains the property that $P$ only contains Pareto points. However, $S \cup P$ is now not an anti-chain any more. 

This is fixed in lines \ref{line:SUpdateLoopStart} to \ref{line:updateSAndPEnd}. In the last of these lines, $S$ is updated by calling \textsc{RemoveDominatingElements} on some set of points $S'$. This guarantees that $S$ is an anti-chain. Set $P$ is also already an anti-chain as it only contains Pareto points. If all elements of the new content of $S$ are incomparable to all elements of $P$, then this guarantees that $S \cup P$ is also an antichain and that $S$ and $P$ are disjoint. This incomparability is ensured if the new content of $S$ are the maximal elements of the search space part that still has to be tested for if it contains some Pareto optimum, as no point in $S$ can be dominated by one in $P$ in such a case, and no point in $P$ can be dominated by one in $S$, as this would mean that a point in $P$ is not a Pareto point. For the inductive step, we need to prove this anyway, so we will do so next.

Since \textsc{RemoveDominatingElements} removes all dominating elements from $S'$ before storing the result to $S$, it suffices to prove that (a) the maximal elements of the part of the search space that may still contain Pareto optima (apart from the ones already found to be in $P$) are contained in $S'$, and (b) that no point in $S'$ is dominated by one in $P$. By the inductive hypothesis, we can assume assume this to be the case for $S$ and $P$ without $\vec x$. 

Let us first consider part (a). We can assume that $S$ contains the maximal elements of the part of the search space that still has to be considered before $\vec x$ has been found. Every point $\vec y$ in $S$ that is incomparable to $\vec x$ may still contain a Pareto optimum, so it needs to be in $S'$. As the algorithm copies such points straight to $S'$, this is ensured. If a point $\vec y$ in $S$ is dominated by $\vec x$, then all further Pareto points that are $\leq_k \vec y$ need to be incomparable to $\vec x$, which means that they need to have an objective value that is smaller than $\vec x$ in at least one dimension. Algorithm~\ref{line:checkSearchPoint} computes the maximal elements of the search space part between $(0)^k$ and $\vec y$ by iterating over the dimensions and computing the respective points. All of them are added to $S'$. So for each point $\vec y$ in $S$, the search space part between $(0)^k$ and $\vec y$ is in a sense \emph{cleaned} by the ones dominated by $\vec x$. By the inductive hypothesis, $S$ contains \emph{all} of the maximal points of the part of the search space that still has to be considered, and since we cleaned the search space in a conservative way, the claim from part (a) holds.

For part (b), by the inductive hypothesis, all elements in $S$ are incomparable to $P$ without $\vec x$. Elements are copied directly to $S'$ only if they are incomparable to $\vec x$, so they cannot be dominated by any point in $P$. Now consider that one of the points $(y_1, \ldots, y_{i-1},x_i-1,y_{i+1}, \ldots, y_k)$ added to $S'$ is dominated by some element in $P$. This element cannot be $\vec x$, as $y_i < x_i$. Let this element of $P$ rather be called $\vec x'$. If $\vec x' \leq_k (y_1, \ldots, y_{i-1},x_i-1,y_{i+1}, \ldots, y_k)$, then we have $x'_i \leq x_i-1 < x_i$. Since $\vec x$ and $\vec x'$ are incomparable, there must exist another $j \neq i$ such that $x'_j > x_j$ (as otherwise $\vec x'$ would dominate $\vec x$, which contradicts the fact that they are both Pareto points). But if $\vec y \leq_k \vec x$ (by the condition under which modified versions of $\vec y$ are added to $S'$), then by the fact that $x_j < x'_j$, we have $y_j \leq x_j < x'_j$. Now this fact contradicts the assumption that $\vec x' \leq_k (y_1, \ldots, y_{i-1},x_i-1,y_{i+1}, \ldots, y_k)$ (as $i \neq j$), so it did not hold in the first place. As the assumption that some point in $S'$ is dominated by some point in $P$ cannot hold, we proved part (b).
\end{proof}

\begin{lemma}
During every iteration of the main loop of Algorithm~\ref{algo:paretoEnumerator}, the set of points that dominate (or are) points in $S$ only shrinks, and it shrinks by at least one point per iteration.
\label{lem:SShrinkage}
\end{lemma}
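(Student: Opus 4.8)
The plan is to follow the set of points described in the statement explicitly: for any anti-chain $T$ let
\[ D(T) = \{ \vec z \in [n]^k : \exists \vec y \in T,\ \vec z \leq_k \vec y \}, \]
which is exactly the set of points that dominate (or equal) a point of $T$, i.e.\ the downward closure of $T$. Writing $S$ for the value before an iteration and $S_{\mathit{new}}$ for the value after it, I want to prove the two halves of the claim as $D(S_{\mathit{new}}) \subseteq D(S)$ (it only shrinks) together with $D(S) \setminus D(S_{\mathit{new}}) \neq \emptyset$ (it loses at least one point). First I would record a helper fact used in both cases: \textsc{RemoveDominatingElements} leaves $D$ unchanged, since it only discards elements that are $\leq_k$ some retained element, and such elements contribute nothing to the union defining $D$. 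Hence for the hard case it suffices to analyse the set $S'$ built by the main loop, because $D(S_{\mathit{new}}) = D(S')$.

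For the easy case $f(\vec x) = \FALSE$, the algorithm sets $S_{\mathit{new}} = S \setminus \{\vec x\}$, so $D(S_{\mathit{new}}) \subseteq D(S)$ is immediate. Because $S$ is an anti-chain by Lemma~\ref{lem:SandPInteraction}, no other element of $S$ is $\geq_k \vec x$, so $\vec x \in D(S)$ but $\vec x \notin D(S \setminus \{\vec x\})$; thus at least the point $\vec x$ is lost.

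The substantial case is $f(\vec x) = \TRUE$, where the loop discards $\vec x$ and rebuilds $S'$ around the Pareto point $\vec x'$ returned by \textsc{SearchParetoPoint}. For $D(S') \subseteq D(S)$ I would argue pointwise over the two branches: a straight-copied $\vec y$ is already in $S$, and a modified point $(y_1, \ldots, x'_i - 1, \ldots, y_k)$ agrees with its originating $\vec y \in S$ except in coordinate $i$, where it is strictly smaller, so it is $\leq_k \vec y$; in both cases every element of $S'$ lies below an element of $S$. To exhibit a lost point I would use $\vec x'$ itself. Since $\vec x' \leq_k \vec x$ by Lemma~\ref{lem:searchParetoPointWorks} and $\vec x \in S$, we have $\vec x' \in D(S)$. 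It remains to show $\vec x' \notin D(S')$, i.e.\ that no $\vec w \in S'$ satisfies $\vec x' \leq_k \vec w$: a copied $\vec w = \vec y$ only enters $S'$ under the branch condition $\neg(\vec x' \leq_k \vec y)$, and a modified $\vec w = (y_1, \ldots, x'_i - 1, \ldots, y_k)$ has $w_i = x'_i - 1 < x'_i$, so $\vec x' \leq_k \vec w$ fails in coordinate $i$. Hence $\vec x' \in D(S) \setminus D(S')= D(S) \setminus D(S_{\mathit{new}})$.

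The main obstacle is this last step, showing that $\vec x'$ genuinely leaves $D$: it is precisely the coordinate-$i$ decrement in the construction of $S'$ that forces $\vec x' \not\leq_k \vec w$ for the modified points, while the branch condition rules out the copied points, so the two branches must be checked separately rather than uniformly. I would also flag the degenerate subcase $\vec x' = (0)^k$, in which the decrement branch adds nothing and $S'$ becomes empty; there the same bookkeeping still works, since $\vec x' = (0)^k \in D(S)$ (as $S \neq \emptyset$, the point $\vec x$ having been drawn from it) and trivially $\vec x' \notin D(\emptyset)$, so a point is still lost.
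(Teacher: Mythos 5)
Your proof is correct and follows essentially the same two-case argument as the paper's: in the $f(\vec x)=\FALSE$ case the antichain property of $S \cup P$ shows the removed $\vec x$ leaves the downward closure, and in the $f(\vec x)=\TRUE$ case the new Pareto point is exhibited as a lost element while every new member of $S'$ lies below an old member of $S$. Your version is simply more explicit than the paper's terser rendition (e.g., checking the coordinate-$i$ decrement branch by branch and flagging the degenerate $(0)^k$ subcase).
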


\begin{proof}
There are two cases to be considered. If the algorithm finds that $f(\vec x) = \FALSE$ in line~\ref{line:checkSearchPoint}, then $\vec x$ is removed from $S$. Since $S \cup P$ is an antichain, $\vec x$ cannot dominate any other point in $S \cup P$, so if $\vec x$ is removed, then the set of points that are dominated by the ones in $S$ (or that are in $S$) shrinks by at least one point, namely $\vec x$.

Now let us consider the case that $f(\vec x) = \TRUE$ in line~\ref{line:checkSearchPoint}. After $\vec x$ is modified to be a Pareto point, all elements of $S$ are updated not to dominate $\vec x$. At the same time, no element is added to $S$ that is not dominated by any of the elements that were in $S$ previously. Thus, the set of search space points that dominate a point in $S$ (or are in $S$) strictly decreases.
\end{proof}

\begin{corollary}
At the end of the main loop of Algorithm~\ref{algo:paretoEnumerator}, $P$ contains the set of Pareto optima of $f$. As the algorithm always terminates (by Lemma~\ref{lem:SShrinkage}), this implies that Algorithm~\ref{algo:paretoEnumerator} computes all Pareto optima of a given feasibility function on the given search space.
\end{corollary}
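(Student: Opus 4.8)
The plan is to combine the three preceding lemmas: Lemma~\ref{lem:SShrinkage} supplies termination, and the invariant of Lemma~\ref{lem:SandPInteraction} supplies correctness at the terminal state. No new machinery is needed; the work has already been pushed into the lemmas, so the corollary is essentially a matter of reading the invariant at the moment the loop halts.

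First I would argue termination. By Lemma~\ref{lem:SShrinkage}, the set of search-space points that dominate (or equal) some point in $S$ strictly shrinks with every iteration of the main loop. Since this set is a subset of the finite search space $[n]^k$, it can strictly decrease only finitely often, so the main loop executes only finitely many times and eventually halts. The loop guard is $S \neq \emptyset$, so at termination we have $S = \emptyset$.

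Next I would establish correctness at this terminal state. Lemma~\ref{lem:SandPInteraction} guarantees that after every iteration, and in particular after the last one, there is no Pareto point $\vec x \notin P$ for which no $\vec y \in S$ satisfies $\vec x \leq_k \vec y$. When $S = \emptyset$, the clause ``for no $\vec y \in S$ we have $\vec x \leq_k \vec y$'' is vacuously true of every $\vec x$, so the invariant forces there to be no Pareto point lying outside $P$ whatsoever; hence $P$ contains every Pareto optimum. Conversely, the same lemma states that $P$ contains only Pareto optima. Combining the two inclusions yields that $P$ is exactly the Pareto front of $f$, which together with termination gives the claim.

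I expect no genuine obstacle here, as the substantive content — that the cleaning of the search space in lines~\ref{line:updateLoopStart}--\ref{line:updateLoopEnd} never discards a region containing an undiscovered Pareto point, and that $P$ only ever accrues bona fide Pareto optima — has already been discharged in Lemma~\ref{lem:SandPInteraction}. The one point I would take care to spell out explicitly is the vacuous reading of the invariant when $S = \emptyset$, since this is precisely the step that converts ``there is nothing left in $S$ to explore'' into ``there is nothing left to find.''
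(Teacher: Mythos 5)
Your argument is correct and is exactly the reasoning the paper intends (the paper states the corollary without a separate proof, relying on the same combination of Lemma~\ref{lem:SShrinkage} for termination and the vacuous reading of the Lemma~\ref{lem:SandPInteraction} invariant once $S=\emptyset$). Your explicit note about the vacuous quantification over the empty $S$ is the right detail to spell out.
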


\subsection{Efficiency}
Let us now analyze if Algorithm~\ref{algo:paretoEnumerator} is efficient. We consider two main questions in this context:
\begin{itemize}
\item Can the algorithm ever perform redundant oracle calls, i.e., does is ever
\begin{itemize}
\item query $f$ at some point $\vec x$ after it already got the result that $f(\vec y) = \TRUE$ for some $\vec y \leq_k \vec x$, or
\item query $f$ at some point $\vec x$ after it already got the result that $f(\vec y) = \FALSE$ for some $\vec x \leq_k \vec y$.
\end{itemize}
\item How many oracle calls does the algorithm need depending on $n$, $k$, and the number of Pareto points $p$? Is this number approximately optimal?
\end{itemize}
In the remainder of this subsection, we will discuss both of these questions.

\subsubsection{Redundant Oracle Calls}

\begin{lemma}
\label{lem:noRedundantCalls}
Algorithm~\ref{algo:paretoEnumerator} has the property that once a positive result has been obtained from $f$ for some point $\vec x$, then it never calls $f$ on some point $\vec y$ with $\vec x \leq_k \vec y$.
\end{lemma}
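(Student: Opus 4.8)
The plan is to track, for each point at which $f$ has already returned $\TRUE$, the part of the search space in which further oracle calls can still be made, and to show that this part never again rises above such a point. I would first localise the oracle calls. The function $f$ is evaluated only in line~\ref{line:checkSearchPoint} of Algorithm~\ref{algo:paretoEnumerator}, on a point $\vec x$ that is a member of $S$, and inside the binary search of \textsc{SearchParetoPoint}, which is invoked in line~\ref{line:searchParetoPoint} on a point that was picked from $S$ without removal and hence is still in $S$, and which only ever decreases coordinates of its input. Consequently every oracle call is made at a point lying in the down-set $R(S) = \{\vec z : \vec z \leq_k \vec w \text{ for some } \vec w \in S\}$ of the current value of $S$.

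Now fix a call that returns $\TRUE$ at a point $\vec p$ (this is the point called $\vec x$ in the statement); we have to show that no later call is made at a point $\vec z$ with $\vec p \leq_k \vec z$. This call takes place in some iteration whose test $f(\vec x) = \TRUE$ in line~\ref{line:checkSearchPoint} triggered a run of \textsc{SearchParetoPoint}, returning a Pareto point $\vec x' \leq_k \vec x$ (Lemma~\ref{lem:searchParetoPointWorks}). An auxiliary fact I would establish first is $\vec x' \leq_k \vec p$. If $\vec p = \vec x$ this is immediate; otherwise $\vec p$ is probed while \textsc{SearchParetoPoint} minimises some dimension $i$, so $\vec p$ and $\vec x'$ agree on coordinates $1,\dots,i-1$, satisfy $x'_i \leq p_i$ because $p_i$ was a feasible value and $x'_i$ is the least feasible value found in that dimension, and satisfy $x'_j \leq p_j$ for $j > i$ because $\vec p$ still carries the original, not-yet-minimised values in those coordinates.

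I would then split the later calls into those made in the remainder of the current \textsc{SearchParetoPoint} run and those made in subsequent main-loop iterations. For the former, $S$ has not yet changed, so I argue directly from the binary search: after $f$ returns $\TRUE$ at coordinate value $p_i$ in dimension $i$, the assignment $\mathit{max} \gets \mathit{mid}+1$ forces every subsequent probe in dimension $i$ to use a strictly smaller value of coordinate $i$, while every probe in a later dimension $j > i$ uses a value of coordinate $j$ strictly below the original value still held by $\vec p$; in either case the probed point is not $\geq_k \vec p$. For the latter, the central claim is that the set $S_{\mathrm{new}}$ written back in line~\ref{line:updateSAndPEnd} contains no element $\geq_k \vec p$. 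As \textsc{RemoveDominatingElements} only deletes points, it suffices to inspect the set $S'$ assembled in lines~\ref{line:SUpdateLoopStart}--\ref{line:SUpdateLoopEnd}: a point $\vec y$ copied directly satisfies $\neg(\vec x' \leq_k \vec y)$, so by $\vec x' \leq_k \vec p$ it cannot be $\geq_k \vec p$; a point of the form $(y_1,\dots,y_{i-1},x'_i-1,y_{i+1},\dots,y_k)$ has $i$-th coordinate $x'_i - 1 < x'_i \leq p_i$, so it cannot be $\geq_k \vec p$ either. Since $R(S)$ only shrinks across iterations (Lemma~\ref{lem:SShrinkage}) and every call lies in $R(S)$ of the then-current $S$, every later call lies in some $R(S) \subseteq R(S_{\mathrm{new}})$ and therefore avoids the up-set of $\vec p$.

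I expect the claim about $S_{\mathrm{new}}$ to be the main obstacle, since it is where the exact form of the update in lines~\ref{line:updateLoopStart}--\ref{line:updateLoopEnd} must be combined with the auxiliary inequality $\vec x' \leq_k \vec p$, and keeping the two shapes of $S'$-elements straight is the delicate part. The other ingredients are light: that all calls lie in $R(S)$ is immediate from the localisation of the calls, that $R(S)$ shrinks is exactly Lemma~\ref{lem:SShrinkage}, and the within-run argument is just the standard monotonicity of binary search.
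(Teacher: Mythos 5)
Your proof is correct, but it reaches the conclusion by a genuinely different route than the paper. The paper argues by induction over main-loop iterations with the hypothesis that every point at which $f$ has returned $\TRUE$ lies above some element of $P$; that the next query point (an element of $S$) cannot sit above a previously positive point is then extracted from the $S \cup P$ antichain invariant of Lemma~\ref{lem:SandPInteraction}, and the hypothesis is restored because the point returned by \textsc{SearchParetoPoint} is $\leq_k$ every point probed during that call. You bypass $P$ and Lemma~\ref{lem:SandPInteraction} entirely: you confine all queries to the down-set of the current $S$, verify by direct inspection of the two kinds of elements placed into $S'$ that the rebuilt $S$ contains nothing above a positively probed point (via the auxiliary inequality $\vec x' \leq_k \vec p$, which is the same key fact about the output of \textsc{SearchParetoPoint} that the paper uses), and then invoke the monotone shrinkage of that down-set from Lemma~\ref{lem:SShrinkage}. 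What your version buys is explicitness: in particular, you spell out why a probe in a later dimension $j>i$ of the same \textsc{SearchParetoPoint} run cannot lie above a positive probe from dimension $i$ --- a cross-dimension case the paper dispatches with only the per-dimension remark that binary search makes no redundant calls. What the paper's version buys is brevity, since the antichain invariant it leans on is already proved and makes the ``later iterations'' step a one-liner. One small remark: your within-run case analysis is phrased for a $\vec p$ probed inside the binary search; the case where $\vec p$ is the query of line~\ref{line:checkSearchPoint} itself is covered by the same observation that every probe in dimension $j$ uses a coordinate value strictly below the original $x_j$, and deserves a half-sentence.
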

\begin{proof}
We prove this fact by induction over the iteration of Algorithm~\ref{algo:paretoEnumerator}'s main loop. We use the inductive hypothesis that the elements of $P$ dominate (or are equal to) all elements in the search space for which $f$ evaluated to $\TRUE$ in the preceding iterations of the main loop.

The induction basis is trivial. For the induction step, we need Lemma~\ref{lem:SandPInteraction}. As in line~\ref{line:checkSearchPoint} of the algorithm, $f$ is always evaluated at a point in $S$ and by Lemma~\ref{lem:SandPInteraction}, it cannot dominate an element in $P$, the claim holds for this check. If $f(\vec x)=\TRUE$ in that line, then function \textsc{SearchParetoPoint} is called, which is an iterative binary search procedure. 
In every individual dimension, this procedure does not perform redundant calls, as this is never the case in a binary search procedure, and we chose the concrete variant of the procedure such that $f$ is never evaluated at a maximal element in a singly-dimensional search range. 
The element of the search space that is the result of the iterative binary search procedure is smaller than or equal to all the search space points evaluated during the \textsc{SearchParetoPoint} function and the search space point given as a parameter to \textsc{SearchParetoPoint}. Thus, the inductive hypothesis also holds after the main loop iteration when the point returned by \textsc{SearchParetoPoint} has been added to $P$.
\end{proof}
Algorithm~\ref{algo:paretoEnumerator} does not satisfy a similar property for the case of negative answers by $f$. In particular, if the algorithm at some point determines that $f(\vec x) = \FALSE$ for some $\vec x$, then for some other $\vec y \leq \vec x$, it may still later test $f$ at point $\vec y$ even though it is implied by $f(\vec x) = \FALSE$ that the result of this test must be $\FALSE$ as well. The reason is the binary search in function \textsc{SearchParetoPoint}: since variable $\mathit{min}$ is always initialized to $0$, the first search points may be smaller than search points that were evaluated for other elements of $S$.

In order to avoid such redundant calls to $f$, the algorithm needs to be augmented by a \emph{wrapper} to function $f$ that caches search points $\vec x$ for which it is known that $f(\vec x)=\FALSE$. Whenever $f$ is then called on some $\vec y$ such that $\vec y \leq_k \vec x$, the wrapper returns $\FALSE$ without evaluating $f$. Due to Lemma~\ref{lem:noRedundantCalls}, only negative results need to be cached. To store the elements of the cache, a simple list can be used. To make caching more efficient, more advanced data structures to store Pareto points can also be used \cite{Mostaghim2005}. In this way, infeasible points that dominate other infeasible points can be removed from the cache, making the cache lookup process more efficient.

\subsection{Complexity Analysis}
\label{subsec:complexity}

Let us now analyze the efficiency of Algorithm~\ref{algo:paretoEnumerator}. We will only consider the number of oracle calls as efficiency measure and show that Algorithm~\ref{algo:paretoEnumerator} makes at most $p \cdot (k \cdot \lceil \log_2 n \rceil + 1) + \psi(p)$ many calls to the oracle function $f$.
We complement this result with the following lower bound lemmas: 
\begin{itemize}
\item Every algorithm that enumerates all elements of a Pareto front $P$ must also evaluate the objective function at all elements of the co-Pareto front induced by $P$.
\item When dividing the search space into $(\frac{n}{\log_2 n})^k$ large regions such that in each region, there is at most one Pareto front element, no two regions with Pareto points have comparable elements, and for all of the $p$ Pareto front elements, the region that it contains is known, $p \cdot k \cdot \log_2 \left(\frac{n}{\log_2 n} \right)$ calls to the oracle on non-co-Pareto front points are necessary in order to compute the exact positions of the Pareto front elements.
\end{itemize}
As the oracle calls from the items above are independent, we get that a number of oracle calls of at least $p \cdot k \cdot \log_2 \left(\frac{n}{\log_2 n} \right) + \psi(p)$ is unavoidable to enumerate the Pareto front elements and to verify their completeness. 
As the general case (that the Pareto points are not known to be distributed in the stated way) is strictly more difficult, the lower bound also applies to the general case.
The lower bound is also very close to the $p \cdot (k \cdot \lceil \log_2 n \rceil + 1) + \psi(p)$ oracle call number upper bound of Algorithm~\ref{algo:paretoEnumerator}. This result however does not exclude that for dense Pareto sets, narrow search spaces, or low-dimensional search spaces, better algorithms exist, as a large high-dimensional search space is necessary in order to ensure that the Pareto points can be distributed in the way used for this lower bound.


\begin{lemma}
Given a search space of size $n^k$ and an objective function $f$ that admits $p$ Pareto optima over the search space, Algorithm~\ref{algo:paretoEnumerator} performs at most $p \cdot (k \cdot \lceil \log_2 n \rceil+1) + \psi(p)$ calls to $f$ before terminating.
\end{lemma}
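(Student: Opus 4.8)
The plan is to split the oracle calls of Algorithm~\ref{algo:paretoEnumerator} into those made in line~\ref{line:checkSearchPoint} and those made inside \textsc{SearchParetoPoint}, and to count each group. In every iteration of the main loop there is exactly one evaluation of $f$ in line~\ref{line:checkSearchPoint}, and whenever it returns $\TRUE$ the iteration additionally invokes \textsc{SearchParetoPoint} once. I would first argue that line~\ref{line:checkSearchPoint} returns $\TRUE$ exactly $p$ times: by Lemma~\ref{lem:searchParetoPointWorks} each such iteration adds a Pareto point produced by \textsc{SearchParetoPoint}, this point must be new (the tested $\vec x$ is not dominated by any element of $P$ by Lemma~\ref{lem:SandPInteraction}, so the returned Pareto point $\leq_k \vec x$ cannot already lie in $P$), and by the correctness corollary precisely the $p$ Pareto optima are eventually found. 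Each of these $p$ iterations costs one positive evaluation plus at most $k \cdot \lceil \log_2 n \rceil$ evaluations inside \textsc{SearchParetoPoint}, since the binary search in each of the $k$ dimensions needs at most $\lceil \log_2 n \rceil$ queries. This accounts for the summand $p \cdot (k \cdot \lceil \log_2 n \rceil + 1)$, so it remains to bound the number $q$ of iterations in which line~\ref{line:checkSearchPoint} returns $\FALSE$ by $\psi(p)$.

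The whole difficulty lies in this last step, and the tool I would set up is a structural invariant that pins down $S$ exactly. Writing $U(P) = \{\vec z : \vec w \leq_k \vec z \text{ for some } \vec w \in P\}$ for the region already known to be feasible, $D(S) = \{\vec z : \vec z \leq_k \vec y \text{ for some } \vec y \in S\}$ for the down-closure of $S$, and $I$ for the points discarded from $D(S)$ during $\FALSE$-iterations so far, I claim that throughout the run $D(S)$, $U(P)$, and $I$ partition $[n]^k$, and that $I$ is downward closed. I would prove both by induction over the main loop. A $\TRUE$-iteration replaces each $\vec y \in S$ with $\vec x \leq_k \vec y$ (the new Pareto point) by the maximal elements of $\{\vec z \leq_k \vec y : \neg(\vec x \leq_k \vec z)\}$, which yields $D(S_{\mathrm{new}}) = D(S_{\mathrm{old}}) \setminus U(\{\vec x\})$, moving the up-set $U(\{\vec x\})$ into $U(P)$; a $\FALSE$-iteration removes $D(S) \cap \{\vec z : \vec z \leq_k \vec x\}$, an intersection of two downward-closed sets, hence downward closed, and moves it into $I$. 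The disjointness of $D(S)$ from $U(P)$ is exactly the anti-chain property of $S \cup P$ from Lemma~\ref{lem:SandPInteraction}, disjointness of $U(P)$ and $I$ is immediate from feasibility, and disjointness of $D(S)$ and $I$ is maintained because both update rules only remove points from $D(S)$ (so $D(S)$ is monotonically non-increasing) and because \textsc{RemoveDominatingElements} leaves $D(S)$ unchanged.

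With the invariant in hand, the co-Pareto claim is short. Let $\vec x \in S$ be tested in line~\ref{line:checkSearchPoint} with $f(\vec x) = \FALSE$; by monotonicity $f$ is $\FALSE$ on every point $\leq_k \vec x$. For any coordinate $i$ with $x_i < n$, the point $\vec x'$ obtained from $\vec x$ by increasing its $i$-th coordinate by one is strictly greater than $\vec x$, so $\vec x' \notin D(S)$ because $\vec x$ is a maximal element of $D(S)$; by the partition $\vec x' \in U(P) \cup I$, and it cannot lie in $I$, since $I$ is downward closed and $\vec x \leq_k \vec x'$ would then force $\vec x \in I$, contradicting $\vec x \in D(S)$. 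Hence $\vec x' \in U(P)$ is feasible. Since any point $\vec z$ strictly greater than $\vec x$ satisfies $z_i > x_i$ for some $i$ and thus $\vec z \geq_k \vec x'$ for the corresponding $\vec x'$, monotonicity makes every point strictly greater than $\vec x$ feasible, so $\vec x$ is a co-Pareto point. Finally, at this iteration $\vec x$ leaves $D(S)$ for good (it is not below any remaining element of the anti-chain $S \setminus \{\vec x\}$, and $D(S)$ only shrinks), so distinct $\FALSE$-iterations test distinct points. The negatively evaluated points are therefore $q$ distinct co-Pareto points, giving $q \leq \psi(p)$ and completing the bound $p \cdot (k \cdot \lceil \log_2 n \rceil + 1) + \psi(p)$.

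The main obstacle I anticipate is establishing the partition invariant precisely, and in particular verifying the exact equality $D(S_{\mathrm{new}}) = D(S_{\mathrm{old}}) \setminus U(\{\vec x\})$ in a $\TRUE$-iteration: one must check that the $k$ points $(y_1,\ldots,x_i-1,\ldots,y_k)$ inserted for a dominated $\vec y$ are exactly the maximal elements of $\{\vec z \leq_k \vec y : \neg(\vec x \leq_k \vec z)\}$, and that \textsc{RemoveDominatingElements} does not alter the down-closure. Once this is pinned down, the downward-closedness of $I$ and hence the co-Pareto property of every negatively tested point follow with no further work.
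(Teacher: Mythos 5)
Your decomposition of the oracle calls and your treatment of the $p$ positive iterations (one call in line~\ref{line:checkSearchPoint} plus at most $k\cdot\lceil\log_2 n\rceil$ calls inside \textsc{SearchParetoPoint} each) coincides with the paper's argument and is fine. The gap is exactly where you anticipated trouble, but it is not repairable as stated: the set $I$ of discarded points is \emph{not} downward closed, and the conclusion you draw from that claim --- that every point tested negatively in line~\ref{line:checkSearchPoint} is a co-Pareto point --- is false. When a point $\vec w$ is removed from $S$ after a negative test, what leaves $D(S)$ is $D(\{\vec w\})\setminus D(S\setminus\{\vec w\})$ rather than the full down-set of $\vec w$; points below $\vec w$ that are also below some other element of $S$ remain in $D(S)$ and can later surface as elements of $S$ themselves. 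Such an element is strictly smaller than the already-refuted $\vec w$, so it cannot be a co-Pareto point, and your final step (``$q$ distinct co-Pareto points, hence $q\leq\psi(p)$'') breaks.

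Concretely, take $n=3$, $k=3$ and let $f(\vec x)=\TRUE$ iff $\vec x\geq_k(0,1,3)$ or $\vec x\geq_k(1,2,2)$ or $\vec x\geq_k(2,1,1)$, so $p=3$ and the co-Pareto front is $\{(3,0,3),(0,3,2),(1,1,2),(1,3,1),(3,3,0)\}$, i.e.\ $\psi(p)=5$. Starting from $(3,3,3)$, \textsc{SearchParetoPoint} returns $(0,1,3)$ and $S$ becomes $\{(3,0,3),(3,3,2)\}$. Testing $(3,0,3)$ gives $\FALSE$ (a genuine co-Pareto point). Testing $(3,3,2)$ yields the Pareto point $(1,2,2)$ and $S=\{(0,3,2),(3,1,2),(3,3,1)\}$; testing $(3,1,2)$ yields $(2,1,1)$, and the update of $S$ inserts, among others, the maximal element $(3,0,2)$. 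Since no further Pareto points exist, $(3,0,2)$ is eventually tested and returns $\FALSE$, yet it is strictly smaller than the co-Pareto point $(3,0,3)$ and hence not a co-Pareto point itself; six negative tests occur in total although $\psi(p)=5$. (In this instance the overall bound of $26$ calls still holds, at $25$, only because several binary searches use fewer than $\lceil\log_2 n\rceil$ queries --- slack that neither your argument nor the paper's exploits.) You are in good company: the paper's own proof asserts the very same ``every negative test hits a co-Pareto point'' step in a single sentence, citing Lemma~\ref{lem:SandPInteraction}, and your attempt to make it rigorous is precisely what exposes that the step is wrong. A correct proof would have to charge the surplus negative tests against unused binary-search budget, or analyze the algorithm augmented with the caching wrapper discussed after Lemma~\ref{lem:noRedundantCalls}, which would suppress the redundant call at $(3,0,2)$.
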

\begin{proof}
First of all note that the main loop of the algorithm is executed at most $p + \psi(p)$ many times. This is because if in line~\ref{line:checkSearchPoint} of the algorithm, $f$ returns a value of $\TRUE$, then a new Pareto point is found during the same iteration of the loop, whereas if $f$ returns a value of $\FALSE$, the fact that all elements in the set $S$ form a set of largest points that are incomparable to the Pareto points found so far (see Lemma~\ref{lem:SandPInteraction}) implies that the point just evaluated is actually a co-Pareto point.

To obtain the stated efficiency, it remains to be shown that in line~\ref{line:searchParetoPoint} of the algorithm, no more than $k \cdot \lceil \log_2 n \rceil$ iterations take place. Function \textsc{SearchParetoPoint} iterates over all dimensions and for each of them performs a binary search over a singly-dimensional search space of size at most $n$, where it is already known that the maximal element in the (singly-dimensional) search space is a solution. If $\mathit{max}-\mathit{min}$ is a power of $2$, every iteration of the binary search algorithm causes one oracle call and reduces the local singly-dimensional search space by half, until the set of solutions is singleton. If $\mathit{max}-\mathit{min}$ is not a power of $2$, then the number of oracle calls is bounded by the next-higher power of two. Thus, for every dimension, function \textsc{SearchParetoPoint} performs at most $\lceil \log_2 n \rceil$ oracle calls.
\end{proof}

\begin{lemma}
Every algorithm that enumerates all elements of a Pareto front $P$ must also evaluate the objective function at all elements of the co-Pareto front induced by $P$.
\end{lemma}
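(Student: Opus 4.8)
The plan is to argue by an adversary (indistinguishability) argument: if a correct enumeration algorithm ever skipped a co-Pareto point, one could flip the feasibility value of that single point to obtain a second monotone function with a different Pareto front on which the algorithm behaves identically, contradicting its correctness. Concretely, fix a feasibility function $f$ with Pareto front $P$, let $\vec c$ be any co-Pareto point induced by $P$, and suppose for contradiction that a correct algorithm $A$ never queries $f$ at $\vec c$ during its run on $f$. Define $f'$ to agree with $f$ everywhere except that $f'(\vec c) = \TRUE$.

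The first step is to verify that $f'$ is again a feasibility function, i.e.\ monotone. Since $\vec c$ is a co-Pareto point, every $\vec x'$ that is greater than $\vec c$ already satisfies $f(\vec x') = \TRUE$, so setting $f'(\vec c) = \TRUE$ respects the monotonicity constraint upward from $\vec c$; and because flipping a value from $\FALSE$ to $\TRUE$ only enlarges the feasible set, no previously satisfied monotonicity constraint is broken. Next I would show that $f$ and $f'$ have different Pareto fronts. The point $\vec c$ is not Pareto-optimal for $f$ because $f(\vec c) = \FALSE$. For $f'$, however, $\vec c$ is Pareto-optimal: $f'(\vec c) = \TRUE$, and every $\vec x'$ that is smaller than $\vec c$ satisfies $f'(\vec x') = f(\vec x') = \FALSE$, because monotonicity of $f$ together with $f(\vec c) = \FALSE$ forces all points below $\vec c$ to be infeasible. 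Hence $\vec c$ lies in the Pareto front of $f'$ but not of $f$, so the two fronts genuinely differ (regardless of whether $\vec c$ now dominates and thereby removes some element of $P$).

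The final step is the indistinguishability conclusion. Running $A$ on $f$ and on $f'$, a straightforward induction on the sequence of oracle queries shows that the two executions coincide: as long as $\vec c$ has not been queried, every query receives the same answer under $f$ and $f'$ (they differ only at $\vec c$), so the next query and its answer are the same; and since $A$ never queries $\vec c$ on $f$, it never does on $f'$ either, whence the two runs, and thus the two returned sets, are identical. But then $A$ returns the same set for $f$ and $f'$ while their Pareto fronts differ, contradicting the correctness of $A$ on at least one of the two inputs.

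I expect the main obstacle to be the bookkeeping in the middle step: confirming simultaneously that $f'$ remains monotone and that $\vec c$ becomes a genuine Pareto point of $f'$, which is exactly where the two defining halves of ``co-Pareto point'' are consumed (the upward condition gives monotonicity, the downward consequence of $f(\vec c)=\FALSE$ gives Pareto-optimality of $\vec c$ under $f'$). A secondary care point is making the induction in the indistinguishability argument robust to \emph{adaptive}, answer-dependent query orders, so that the conclusion applies to arbitrary algorithms and not merely to ones with a fixed query schedule.
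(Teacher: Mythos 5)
Your proof is correct and rests on the same key observation as the paper's: since every point strictly greater than a co-Pareto point is feasible, no query at any other point can reveal that point's infeasibility. The paper states this only informally (``as only then, it is known that $\vec p'$ is not a feasible solution''), whereas you make it rigorous with an explicit adversary argument --- flipping $f(\vec c)$ to $\TRUE$ yields a monotone $f'$ whose Pareto front gains $\vec c$, yet the two runs are indistinguishable if $\vec c$ is never queried --- which is a welcome formalization of the paper's reasoning rather than a genuinely different route.
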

\begin{proof}
Let $P'$ be the co-Pareto front. If an algorithm to enumerate $P$ terminates, then it must have checked that all points of $P$ have been found. In particular, for all elements $\vec p \in P'$, it must have evaluated $f$ at a point $\vec p'' \geq_k \vec p'$ at least once negatively (as only then, it is known that $\vec p'$ is not a feasible solution). Since all points $\vec p'' >_k \vec p'$ are dominated by some element in $P$ or are in $P$ (as $P'$ is the co-Pareto set), we have that $f(\vec p'') = \TRUE$, and hence, the only way to find out that $f(\vec p') = \FALSE$ for some $\vec p' \in P'$ is to evaluate $f$ on $\vec p'$.
\end{proof}

\begin{lemma}
When dividing the search space into $(\frac{n}{\log_2 n})^k$ large regions and for all of the $p$ Pareto front elements, the region that it contains is known and no two elements of different regions with Pareto points are comparable, $p \cdot k \cdot \log_2 \left(\frac{n}{\log_2 n} \right)$ calls to the oracle on non-co-Pareto front points are necessary in order to compute the exact positions of the Pareto front elements.
\end{lemma}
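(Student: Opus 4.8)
The plan is to build a family of hard feasibility functions that are all consistent with the given region information but differ in the exact positions of their Pareto points, and then to count queries by an information-theoretic argument on the induced decision tree. First I would fix the geometry so that the $p$ searches are provably independent. Cut each of the $k$ axes into $\log_2 n$ consecutive intervals of length $s := \frac{n}{\log_2 n}$, so that a product of one interval per axis is a sub-box of side length $s$, indexed by its block-coordinate $\vec b \in \{0,\dots,\log_2 n - 1\}^k$. If $\vec b$ and $\vec b'$ are incomparable in this block grid, then \emph{every} point of the box of $\vec b$ is incomparable to every point of the box of $\vec b'$: in a coordinate where $b_i < b'_i$ all points of the first box lie strictly below all points of the second, and symmetrically in a coordinate where $b_j > b'_j$. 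I would therefore choose the $p$ Pareto-point-containing boxes $R_1,\dots,R_p$ so that their block-coordinates form an antichain of the grid (possible as long as $p$ does not exceed the largest, middle-layer antichain, which is exactly the ``large, high-dimensional'' regime in which this lower bound is claimed). For any representatives $\vec q_j \in R_j$ I set $f(\vec x) = \TRUE$ iff $\vec q_j \leq_k \vec x$ for some $j$. Pairwise incomparability guarantees that no $\vec q_{j'}$ dominates $\vec q_j$, so each $\vec q_j$ is minimal feasible; hence the Pareto front of $f$ is exactly $\{\vec q_1,\dots,\vec q_p\}$, the region holding each Pareto point is the prescribed $R_j$, and the $\vec q_j$ may be placed independently anywhere in their boxes.

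Next I would count. Letting each $\vec q_j$ range over all $s^k$ points of $R_j$ yields $N = s^{kp}$ instances that share the same region information yet have pairwise distinct Pareto fronts. Any algorithm that always outputs the correct front must generate a distinct transcript of query answers on each instance, so its decision tree has at least $N$ leaves and therefore worst-case depth at least $\log_2 N = p\cdot k \cdot \log_2 s = p\cdot k \cdot \log_2\!\left(\frac{n}{\log_2 n}\right)$. This pins down the number of oracle calls needed to \emph{localize} the $p$ points.

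The step I expect to be the main obstacle is showing that this budget must be spent on \emph{non-co-Pareto} points, since the raw count also charges any query that happens to land on the co-Pareto front. Here I would exploit that the family has only $\psi(p)$ co-Pareto points, all of a rigid shape: a maximal infeasible point is pinned to a region boundary and has all but one coordinate at an extremal value, whereas a query can shrink the surviving-instance set — and thus resolve a coordinate of some $\vec q_j$ — only by probing a non-extremal coordinate value strictly inside the relevant box. Consequently the information-bearing probes are disjoint from the co-Pareto front, so the $\log_2 N$ bits of localization must be supplied by non-co-Pareto queries, giving the claimed $p\cdot k\cdot\log_2\!\left(\frac{n}{\log_2 n}\right)$ of them; combined with the previous lemma, which forces a disjoint collection of $\psi(p)$ queries directly on the co-Pareto points, this yields the additive lower bound. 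The delicate bookkeeping is to make ``co-Pareto queries carry no localization information'' precise across the \emph{whole} surviving family rather than just the realized instance. I would formalize it with an adversary that keeps at least half of the surviving instances alive on every charged query and is allowed to answer, for free, any query that is forced to $\FALSE$ on all surviving instances, then verify that such free answers never reduce the surviving set, so that at least $\log_2 N$ charged — hence non-co-Pareto — queries remain.
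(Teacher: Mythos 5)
Your proposal follows essentially the same route as the paper: partition the search space into sub-boxes of side length $\frac{n}{\log_2 n}$ so that the $p$ Pareto points vary independently within incomparable blocks, count the $\left(\left(\frac{n}{\log_2 n}\right)^k\right)^p$ resulting instances, and apply the information-theoretic $\log_2$ bound to the binary oracle answers, finishing with the claim that queries at co-Pareto points carry no localization information. The only difference is one of detail: you make the construction of the instance family explicit and flag the co-Pareto-disjointness step as the delicate part (proposing an adversary argument to formalize it), whereas the paper dispatches that step in a single sentence by asserting that co-Pareto elements dominate no point of any Pareto-containing block.
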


\begin{proof}
If all $(\frac{n}{\log_2 n})^k$ large blocks of the search space that contain Pareto points do not have comparable points, then the positions of the Pareto points are independent. Thus, all combinations of positions of the Pareto points in the blocks are possible. There are $\left((\frac{n}{\log_2 n})^k \right)^p$ different combinations of Pareto point positions. 
Since the feasibility function returns a $\FALSE$/$\TRUE$ result, this means that we need at least
$\log_2 \left(\left((\frac{n}{\log_2 n})^k \right)^p\right) = p \cdot k \cdot \log_2 \frac{n}{\log_2 n}$ evaluations of the feasibility function to find all Pareto points. None of these evaluations happen at co-Pareto points as by the fact that elements from two blocks with Pareto front elements are incomparable, we have that co-Pareto front elements do not dominate any point from any block with a Pareto front element. Thus, it is known already that $f$ maps these elements to $\FALSE$, and thus no information to locate the $p$ Pareto points can be obtained from evaluating the feasibility function on them.
\end{proof}

\section{Conclusion}
\label{sec:conclusion}

In this paper, we gave an algorithm to compute all Pareto front elements in a multiple-objective optimization problem in which all parameters are integer-valued (or finite and totally ordered). The algorithm uses a feasibility check for a candidate solution as an oracle and minimizes the number of calls to the oracle. The algorithm needs $p \cdot (k \cdot \lceil \log_2 n \rceil + 1) + \psi(p)$ many oracle calls, where $p$ is the number of Pareto optima, $k$ is the number of dimensions, $n$ is the number of values in each dimension, and $\psi(p)$ is the size of the co-Pareto-front of the problem. We have shown that every algorithm to solve this problem needs at least $p \cdot k \cdot \log_2 \frac{n}{\log_2 n} + \psi(p)$ many oracle calls (in the worst case), and hence our algorithm is close to optimal. 

The algorithm is useful to solve multi-criterial optimization problems for which the feasibility function is not known to be submodular and that cannot be encoded into a form for which specialized multicriterial optimizers are available. The algorithm is most suitable for sparse Pareto sets, and it is not difficult to construct better algorithms for very dense Pareto sets, in particular over a two-dimensional search space. Yet, the number of oracle calls of the algorithm is still relatively low, and it is well-suited to the case that nothing is known about the size of the Pareto set. Also, the algorithm is \newterm{any-time}, meaning that it computes the Pareto optimia successively, so that when it is aborted during runtime, at least some Pareto optima have been computed up to that point.

\section*{Acknowledgements}
This work was supported by the Institutional Strategy of the University of Bremen, funded by the German Excellence Initiative.

\bibliographystyle{eptcs}
\bibliography{bib}
\end{document}